\documentclass[11pt, oneside]{article}   	
\usepackage{fullpage}
\usepackage[pass]{geometry}
\usepackage{graphicx}				
\usepackage{amsmath,amssymb,amsthm,enumitem}

\usepackage[bookmarks=true,hypertexnames=false,pagebackref]{hyperref}
\hypersetup{colorlinks=true, citecolor=blue, linkcolor=red, urlcolor=blue}

\usepackage{cleveref}
\usepackage{comment}
\usepackage{tikz}
\usetikzlibrary{arrows,arrows.meta,backgrounds,calc,fit,decorations.pathreplacing,decorations.markings,shapes.geometric}
\tikzset{>=latex}

\newtheorem{theorem}{Theorem}
\newtheorem{lemma}[theorem]{Lemma}

\def\NP{\ensuremath{\mathbf{NP}}}
\def\RP{\ensuremath{\mathbf{RP}}}
\def\Ptime{\ensuremath{\mathbf{P}}}

\def\numP{\textup{\#}\mathbf{P}}

\def\*#1{\mathbf{#1}}
\def\+#1{\mathcal{#1}}
\def\-#1{\mathrm{#1}}
\def\=#1{\mathbb{#1}}

\def\bfz{\mathbf{0}}
\def\bfo{\mathbf{1}}

\def\Rset{\mathbb{R}}
\def\Qset{\mathbb{Q}}
\def\bfS{\mathbf S}
\def\SAT{\textsc{SAT}}
\def\nSAT{\textsc{\#SAT}}
\def\nBIS{\textsc{\#BIS}}
\def\nRet{\textsc{\#Ret}}
\def\nHom{\textsc{\#Hom}}
\def\nPolytopeVertices{\textsc{\#Vertices of a 0/1 Polytope}}
\def\HBIS{\mathcal{H}_{\mathrm{BIS}}}
\def\Hom{\mathop{\mathrm{Hom}}}
\def\eps{\varepsilon}
\newcommand{\abs}[1]{\left\vert#1\right\vert}
\newcommand{\cupdot}{\mathbin{\mathaccent\cdot\cup}}

\let\epsilon=\varepsilon
\let\phi=\varphi
\def\Hfour{H_4}

\title{Approximate counting of vertices of 0/1 polytopes: \\ a stronger hardness result}
\author{Heng Guo\\School of Informatics\\University of Edinburgh\\\url{hguo@inf.ed.ac.uk} \and
Mark Jerrum\thanks{‘This work was supported by the UK Engineering and Physical Sciences Research Council grant number UKRI\,2771.}
\\School of Mathematical Sciences\\Queen Mary, University of London\\\url{m.jerrum@qmul.ac.uk}}
\date{}							

\begin{document}
\maketitle

\begin{abstract}
We show that approximately counting the vertices of a bounded 0/1 polytope, presented as a system of rational linear inequalities, is, informally speaking, $\NP$-hard.   In particular, there is no FPRAS for this problem unless $\RP=\NP$.  The proof is by a reduction from approximately counting homomorphisms from a given graph to a particular four-vertex graph.  
The main proof ideas were found using GPT-5.6 Sol Ultra.

\end{abstract}

\section{Introduction}

We are interested in counting vertices of a polytope defined by linear inequalities $Ax\leq b$.  We concentrate on 0/1 polytopes, namely polytopes all of whose vertices belong to $\{0,1\}^n$.  The polytope is presented by its defining inequalities, rather than by a list of its vertices.
\begin{description}[itemsep=0pt]
\item [Problem:] \nPolytopeVertices.
\item [Instance:] An $m\times n$ rational matrix $A\in\Qset^{m\times n}$ and a vector $b\in\Qset^m$.
\item [Promise:] The inequalities $Ax\leq b$ define a bounded 0/1 polytope~$P$.
\item [Output:] The number $v(P)$ of vertices of~$P$.
\end{description}
It is not known at the time of writing whether there is a polynomial-time algorithm to test whether a given pair $(A,b)$ defines a 0/1-polytope, so we include a promise to this effect.

We allow the polytope to be empty, in which case $v(P)=0$.  An FPRAS for \nPolytopeVertices{} is a randomised algorithm that, given an instance and a parameter $0<\eps\leq1$, outputs a non-negative random value~$Z$ such that
\[
 \Pr\big[(1-\eps)v(P)\leq Z\leq(1+\eps)v(P)\big]\geq\tfrac34,
\]
and runs in time polynomial in the encoding length of $(A,b)$ and in $\eps^{-1}$.

As with the theory of NP-completeness, we may gain evidence for the non-existence of an FPRAS for a given counting problem using a suitable notion of reduction.  In this application, the appropriate notion is \textit{AP-reduction}, short for (polynomial-time) Approximation-Preserving reduction.   In this paper, our reductions are actually \textit{parsimonious}, that is to say, they preserve the number of solutions exactly.  As an efficient parsimonious reduction would meet any reasonable definition of approximation-preserving reduction, we do not provide an exact definition of AP-reduction here.  For a precise definition of concepts used in this paper and a general introduction to the complexity of approximate counting, refer to~Dyer, Goldberg, Greenhill and Jerrum~\cite{Relative}. 

Our main result is the following.
\begin{theorem}\label{thm:main}
\nPolytopeVertices{} is hard for $\numP$ under AP-reductions.  In particular, there is no FPRAS for  \nPolytopeVertices{} unless $\RP=\NP$.
\end{theorem}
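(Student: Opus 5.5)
The plan is to give a parsimonious reduction from $\nHom(\Hfour)$, approximately counting homomorphisms from an input graph $G$ to a fixed four-vertex graph $\Hfour$, to \nPolytopeVertices. For the source problem I would cite a result from the literature (Galanis, Goldberg and Jerrum's classification of approximate $H$-colouring) showing that, for a suitable small graph $\Hfour$ (possibly with loops), $\nHom(\Hfour)$ is $\numP$-hard under AP-reductions, indeed AP-interreducible with $\nSAT$. I would choose $\Hfour$ for the convenience of the encoding below, not for the hardness proof. Since AP-reductions compose and $\nSAT$ has no FPRAS unless $\RP=\NP$, the second sentence of Theorem~\ref{thm:main} then follows from the first.

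The key observation is that for a 0/1 polytope $P$, every point of $P\cap\{0,1\}^n$ is a vertex, and every vertex lies in $\{0,1\}^n$. Hence $v(P)=\abs{P\cap\{0,1\}^n}$. So it suffices to build, in polynomial time from $G$, a system $Ax\le b$ with two properties:
\begin{enumerate}
\item[(i)] its 0/1 solutions are in bijection with $\Hom(G,\Hfour)$;
\item[(ii)] the polytope it defines is integral.
\end{enumerate}
For the encoding I would give each vertex $u$ of $G$ two bits $(a_u,b_u)$, naming the four vertices of $\Hfour$. I would also add auxiliary 0/1 variables per edge and per vertex where needed, chosen so that each auxiliary variable is forced to a unique value given the colour bits; this keeps the count parsimonious. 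Each edge $uv$ then gets a constant number of linear inequalities cutting out exactly the allowed colour pairs $(\,(a_u,b_u),(a_v,b_v)\,)\in E(\Hfour)$. Boundedness comes from adding $0\le x\le 1$.

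The main obstacle is (ii). Bipartite independent-set-type constraints alone yield integral polytopes but only $\nBIS$-hardness, so $\Hfour$ and the gadget have to go beyond that while staying integral. My first attempt would be to pick $\Hfour$ and the gadget so that, after complementing some variables (replacing $x$ by $1-x$), the constraint matrix is totally unimodular. Concretely, I would aim for an interval or network matrix in which each row involves variables along a path in a fixed orientation. Integrality would then follow from Hoffman--Kruskal, since $b$ is integral.

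If total unimodularity is impossible, the fallback is a direct argument. I would show that any fractional point $x\in P$ can be written as a convex combination of two distinct points of $P$, by perturbing $x$ along the fractional coordinates. The perturbation would be organised according to the structure of $\Hfour$ (for instance rounding all coordinates in $(0,t]$ up or down together), so that every tight constraint remains tight. Choosing $\Hfour$ so that this works, while it still inherits $\numP$-hardness, is where I expect most of the effort to go.
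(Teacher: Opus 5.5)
\textbf{There is a genuine gap: the target graph, the inequalities and the integrality proof are all missing.} Your outer framework matches the paper's: two bits per vertex of $G$, and $v(P)=\abs{P\cap\{0,1\}^n}$ for a 0/1 polytope, so a parsimonious encoding with an integral polytope suffices. But you never fix the target graph, never write the inequalities, and never prove (ii). Those are the entire content of the theorem.

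\emph{Choice of graph.} The missing idea is to pick the graph so that adjacency is a single linear threshold in Hamming weights. On $\{0,1\}^2$, put $c\sim d$ if and only if $\abs{c}_1+\abs{d}_1\ge2$. This gives a reflexive triangle on $10,01,11$ plus an unlooped leaf $00$ at $11$. Then one inequality $y_{u,1}+y_{u,2}+y_{v,1}+y_{v,2}\ge2$ per edge, together with $0\le y\le1$, cuts out exactly $\Hom(G,\Hfour)$, with no auxiliary variables.

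\emph{Integrality.} With $s_v=y_{v,1}+y_{v,2}$, the argument runs as follows.
\begin{enumerate}
\item At a vertex, no block has both coordinates fractional: otherwise shift them by $+\delta$ and $-\delta$, keeping $s_v$ fixed. So every fractional block has $s_v\notin\mathbb{Z}$.
\item Hence a tight constraint $s_u+s_v=2$ joins two fractional blocks or none.
\item An odd cycle of tight edges would force $s_v=1$, so the tight graph on fractional blocks is bipartite.
\item Moving the fractional coordinates by $+\delta$ on one side of a component and $-\delta$ on the other keeps every tight constraint tight. Slack constraints and the box tolerate small $\delta$.
\end{enumerate}
The signs must alternate across tight edges. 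A same-direction move of fractional coordinates, as in your fallback, violates a tight edge between them in one of the two directions. The real work is the bipartiteness. Projected to $s$, this polytope is twice the fractional vertex cover polytope, which is only half-integral; the two-bit lift realises the half-integral value $s_v=1$ by the integral blocks $10$ and $01$.

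\emph{Hardness source.} You then need hardness for this particular graph, not a general classification. It is Graph~39 in Kelk's thesis. Alternatively, reduce from retractions to $\Hfour$ via the Focke--Goldberg--\v{Z}ivn\'y trichotomy, encoding singleton lists by fixing coordinates. The Galanis--Goldberg--Jerrum theorem for plain $H$-colourings gives only $\nBIS$-hardness, so it does not serve here.

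\textbf{Your preferred route, total unimodularity, should be dropped.} Complementing variables only negates columns, which preserves TU-ness, so it gains nothing. If each row is supported on a directed path, as you describe, the constraint matrix is the transpose of a network matrix. The paper notes that this class is AP-reducible to $\nBIS$, so this route cannot give $\numP$-hardness unless $\nBIS$ is itself $\nSAT$-hard. For network matrices proper, an FPRAS may exist. For TU instances in general, $\numP$-hardness is open, so a TU construction would settle an open question rather than serve as a first attempt. The paper's polytope is indeed not TU: for $G=C_3$ the edge rows contain a $3\times3$ submatrix of determinant~$2$, and its integrality proof is correspondingly non-TU.
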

A counting problem $\Pi$ is said to ``hard for $\numP$ under AP-reductions'' if, for every counting problem $\Pi'\in\numP$, there is an AP-reduction from $\Pi'$ to $\Pi$.  Some explanatory comments on the statement of this theorem follow later in the section.  For the time being, we merely highlight the conclusion that, under a standard complexity-theoretic assumption, there is no polynomial-time algorithm (even a randomised one) that approximates the number of vertices of a 0/1-polytope.  

The proof of Theorem~\ref{thm:main} is via an AP-reduction from a hard graph homomorphism counting problem.  A homomorphism from a graph~$G$ to a graph~$H$ is a function $\sigma:V(G)\to V(H)$ such that, for all pairs of vertices $u,v\in V(G)$, it is the case that $\{u,v\}\in E(G)$ implies $\{\sigma(u),\sigma(v)\}\in E(H)$.  We denote the set of all homomorphisms from $G$ to $H$ by $\Hom(G,H)$.  For any fixed undirected graph $H$, the homomorphisms-to-$H$ counting problem is defined  as follows.
\begin{description}[itemsep=0pt]
\item [Problem:] $\nHom(H)$.
\item [Instance:] An undirected graph $G$.
\item [Output:]  $|\Hom(G,H)|$, the number of homomorphisms from $G$ to $H$.
\end{description}
In \S\ref{sec:reduction} we present a (deterministic) polynomial-time parsimonious reduction from $\nHom(H)$ to $\nPolytopeVertices$, for the specific graph~$\Hfour$ depicted in Figure~\ref{fig:H}.  As $\nHom(\Hfour)$ is known to be hard for $\numP$ under AP-reductions~\cite{KelkThesis}, Theorem~\ref{thm:main} follows immediately, by transitivity of AP-reducibility.  

As promised, here are some comments on the terminology used in, and the interpretation of, Theorem~\ref{thm:main}.  In the literature on approximate counting, terminology such as ``$\Pi$ is $\nSAT$-hard [under AP-reducibility]'' appears.  Since every problem in $\numP$ is parsimoniously reducible to $\nSAT$ (the problem of counting satisfying assignments of a CNF Boolean formula) these formulations are equivalent to the one used in Theorem~\ref{thm:main}. 

At first sight, the conclusion in Theorem~\ref{thm:main} looks rather weak.  One might imagine that the existence of an FPRAS for  \nPolytopeVertices{} should lead to something more dramatic than $\RP=\NP$, for example to $\RP=\numP$ or the collapse of the polynomial hierarchy.  However, Valiant and Vazirani~\cite{NPeasy} showed that every problem in $\numP$ can be approximated by a polynomial-time randomised Turing machine equipped with an $\NP$-oracle.  So whereas the complexity gap between existence and exact counting is expected to be huge, the gap is almost non-existent for approximate counting.  For the same reason, the phrase ``hard for $\NP$ under AP-reductions'' used in~\cite{GuoJerrum} is equivalent to the formulation used Theorem~\ref{thm:main}, since (with a slight abuse of terminology) every problem in $\numP$ is AP-reducible to any $\NP$-complete problem.  

The standard definition of AP-reduction allows the algorithm performing the reduction to be randomised.  However, the reductions used here are deterministic, so we can equally deduce that there is no polynomial-time deterministic approximation algorithm (technically FPTAS) for \nPolytopeVertices{} unless $\Ptime=\NP$.  (The definition for FPTAS matches the one for FPRAS above, except that the word ``randomised'' is deleted, and the random variable $Z$ becomes a deterministic value.)  Of course, one needs to check that the reductions used to establish hardness results quoted in this paper are deterministic, but that is indeed the case.

A matrix $A$ is \textit{totally unimodular} (TU) if every square submatrix of~$A$ has determinant $-1$, 0 or 1.  It is not difficult to verify that the vertices of the polytope defined by $Ax\leq b$ and $\bfz\leq x\leq\bfo$, where $A$ is TU, lie in $\{0,1\}^n$.  
Many 0/1-polytopes appearing in combinatorial optimisation arise in this way.   One such is the bipartite independent set polytope.  It follows that, restricted to TU instances, approximating $\nPolytopeVertices$ is as hard as approximating $\nBIS$, the problem of counting independent sets in a bipartite graph. No FPRAS is known for $\nBIS$ and the same holds for the wide range of counting problems interreducible with $\nBIS$ via AP-reductions.  At the same time, there is currently no evidence that $\nBIS$ is hard for $\numP$ under AP-reductions.  

As well as observing that approximating $\nPolytopeVertices$ can be as hard as $\nBIS$, Guo and Jerrum~\cite{GuoJerrum} showed that the generalisation to polytopes with vertices in $\{0,\frac12,1\}^n$ is hard for $\numP$ under AP-reductions.  Theorem~\ref{thm:main} is a common strengthening of these two results.  

Two prominent classes of TU matrices are \textit{network matrices} and their transposes.  As far as counting problems arising from transposes of network matrices are concerned we have a clear understanding:  $\nBIS$ is an example in this class, and it is a hardest such under AP-reducibility.  For network matrices our understanding is incomplete, and it is possible that all instances arising from these have an FPRAS~\cite{GuoJerrum}.  We observe in the final section of this paper that the polytope $Q(G)$ used in the main reduction is not TU\null. Thus the approximation complexity of $\nPolytopeVertices$ when restricted to TU instances remains open.  It could be $\nBIS$-equivalent or maximally hard (or somewhere between these). 

A related conjecture by Mihail and Vazirani \cite{MV89} is that the edge expansion of any $0/1$-polytope is at least $1$,
which implies rapid mixing of the simple random walks on the graph of the polytope.
Our hardness result does not directly invalidate this conjecture, since the degrees of vertices can be exponentially large in the dimension of the ambient space.
Nonetheless, this conjecture is recently found false by Yang \cite{Yan26}, who constructed $0/1$-polytopes with exponentially small edge expansion.

\subsection{Statement of AI use}
The main proof ideas were found using GPT-5.6 Sol Ultra.
In particular, it generated a hardness proof from $\nRet(\Hfour)$, which is shown hard in \cite{FGZ21}.
We will first present a reduction from $\nHom(\Hfour)$, which is simpler and the hardness of $\nHom(\Hfour)$ is shown by Kelk \cite{KelkThesis} in his PhD thesis.
The proofs are written or rewritten by the authors, who take responsibility of any error.

\section{The reduction}\label{sec:reduction}

Our reduction rests on the graph $\Hfour$ with vertex set $V(\Hfour)=\{00,10,01,11\}$ depicted in \Cref{fig:H}. Note that we identify every vertex of~$\Hfour$ with a vector in $\{0,1\}^2$.  For $c,d\in V(\Hfour)$, including the case $c=d$, we put $\{c,d\}$ in the edge set $E(\Hfour)$ if and only if
\begin{equation}\label{eq:H-adjacency}
 \abs{c}_1+\abs{d}_1\geq2,
\end{equation}
where $\abs{c}_1$ is the Hamming weight of~$c$.  Thus $10$, $01$ and $11$ form a reflexive triangle (i.e., a triangle with loops on all vertices), while $00$ is an unlooped leaf adjacent only to~$11$.

\begin{figure}[htbp]
  \centering
  \begin{tikzpicture}[scale=1.05, every node/.style={font=\large}]
    \node[draw,circle,minimum size=9mm] (z) at (0,0) {$00$};
    \node[draw,circle,minimum size=9mm] (o) at (2,0) {$11$};
    \node[draw,circle,minimum size=9mm] (a) at (4,1.1) {$10$};
    \node[draw,circle,minimum size=9mm] (b) at (4,-1.1) {$01$};
    \draw (z)--(o)--(a)--(b)--(o);
    \draw (a) edge[loop above] (a);
    \draw (b) edge[loop below] (b);
    \draw (o) edge[loop above] (o);
  \end{tikzpicture}
  \caption{The graph~$H_4$.}
  \label{fig:H}
\end{figure}

\begin{lemma}\label{lem:reduction}
$\nHom(\Hfour)$ is AP-reducible to $\nPolytopeVertices$.
\end{lemma}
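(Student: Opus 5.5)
The plan is to give a deterministic polynomial-time map $G\mapsto Q(G)$ that sends a graph $G$ to a system of linear inequalities. The polytope $Q(G)$ should be a bounded 0/1 polytope whose vertices are in bijection with $\Hom(G,\Hfour)$. This is a parsimonious reduction, so it is an AP-reduction. Each vertex $v\in V(G)$ gets two coordinates $(x_v,y_v)\in[0,1]^2$, and a 0/1 point $(x,y)$ encodes the map $\sigma(v)=x_vy_v\in V(\Hfour)$. By \eqref{eq:H-adjacency}, $\sigma$ is a homomorphism if and only if for every edge $\{u,v\}\in E(G)$
\[
 x_u+y_u+x_v+y_v\geq 2 .
\]
So the 0/1 points satisfying these edge inequalities together with $\bfz\leq x,y\leq\bfo$ are exactly the homomorphisms. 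Since every 0/1 point of the cube is a vertex of the cube, each such point is automatically a vertex of any polytope containing it and contained in the cube. Hence the whole task reduces to adding valid inequalities that remove all fractional vertices without cutting off any homomorphism.

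The first step is to identify the fractional vertices of the naive system, for instance all coordinates equal to $\tfrac12$ on a component. Next I would add, for each edge $\{u,v\}$ and each orientation, the inequalities
\[
 x_u+x_v+y_v\geq1 ,\qquad y_u+x_v+y_v\geq1 .
\]
These are valid on homomorphisms. Such an inequality can fail at a 0/1 point only if $v\mapsto 00$ and $x_u=0$ (respectively $y_u=0$), but then $u$ would have to map to $11$, which is a contradiction. I would also consider the combinatorial description of homomorphisms: the set $I=\sigma^{-1}(00)$ is independent, its neighbourhood maps to $11$, and every other vertex maps freely into the reflexive triangle. This suggests auxiliary "indicator of $00$" variables $z_v$, linked by $x_v+y_v\geq 1-z_v$, $x_v\geq z_u$, $y_v\geq z_u$ for each edge, and $z_u+z_v\leq 1$ for each edge. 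These variables should be kept only if they do not create extra vertices. That is the case when they are determined by $(x,y)$ at vertices, for example when $z_v=1-\max(x_v,y_v)$ is forced by inequalities tight at vertices.

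The main step is proving integrality: every vertex of the resulting polytope lies in $\{0,1\}^{2n}$. I would first try a perturbation argument. Take a point $p$ with a fractional coordinate. Let $F$ be the set of fractional coordinates, and consider the tight constraints restricted to $F$. I would aim to construct a nonzero direction $d$ supported on $F$ such that $p\pm\epsilon d$ stays feasible, using a $\pm$ sign pattern. A natural choice is to increase $x_v$ and decrease $y_v$ at fractional vertices, propagating signs along tight edges. This is the analogue of the odd-cycle/bipartite argument for the vertex cover polytope, and it is exactly where half-integral vertices could survive. If they do survive, I would strengthen the system with odd-cycle-type inequalities that are still polynomial in number. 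Alternatively I would redesign the gadget so that the constraint matrix restricted to each edge block has the required structure.

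Showing that no fractional vertex survives is the step I expect to be the real obstacle. Threshold rounding alone fails on the three-variable inequalities, so a genuinely combinatorial perturbation argument is needed. Once integrality holds, correctness and the polynomial size of the system are immediate.
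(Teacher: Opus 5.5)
Your encoding, the edge inequalities $x_u+y_u+x_v+y_v\geq 2$, and the bijection argument are correct and match the paper. The bijection works because every 0/1 point of a polytope inside the cube is a vertex, and by \eqref{eq:H-adjacency} the feasible 0/1 points are exactly the homomorphisms. The gap is integrality. You never prove it, and your diagnosis of it is wrong: the naive system (box constraints plus the edge inequalities) already has only 0/1 vertices. So the extra inequalities, the auxiliary $z_v$ variables and the odd-cycle cuts are all unnecessary. Inequalities such as $x_u+x_v+y_v\geq1$ are merely redundant. The lift to $z_v$ would actually be risky: it changes the ambient space, and you give no argument that vertices of the lifted polytope stay in bijection with $\Hom(G,\Hfour)$. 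In particular, the all-$\tfrac12$ point you cite is not a vertex. Replacing $(x_v,y_v)$ by $(x_v+\delta,y_v-\delta)$ leaves $s_v=x_v+y_v$ unchanged, hence preserves every edge inequality. When both coordinates of the block are fractional, both signs of $\delta$ also respect the box.

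That intra-block move is the missing idea. It shows that at a vertex each block has at most one fractional coordinate, so $s_v$ is non-integral exactly on the set $F$ of fractional blocks. Two consequences follow.
\begin{enumerate}
\item No tight edge $s_u+s_v=2$ can join $F$ to its complement, since an integer plus a non-integer is not $2$.
\item The graph $T$ of tight edges inside $F$ is bipartite. Around an odd cycle the tight equations force $s_{v_0}=2-s_{v_0}$, i.e.\ $s_{v_0}=1\in\mathbb{Z}$, which is impossible.
\end{enumerate}
Now take a component $C=C^+\cup C^-$ of $T$. Change the unique fractional coordinate of each block by $+\delta$ on $C^+$ and by $-\delta$ on $C^-$. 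Tight edges incident with $C$ all lie inside $C$ and are preserved. Every other affected edge inequality has slack, and the moved coordinates lie strictly in $(0,1)$. So both signs are feasible for small $\delta$, contradicting that the point is a vertex. Your instinct that a bipartite/alternating perturbation is needed was right. Without the intra-block move and the resulting non-integrality of $s_v$, however, you cannot exclude odd cycles of tight edges. That is why you expected half-integral vertices to survive and reached for extra constraints.
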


\begin{proof}
Let $G$ be an instance of $\nHom(\Hfour)$.  
We construct a polytope $Q(G)$ in a $2|V(G)|$-dimensional ambient space.  Let $y\in\mathbb{R}^{V(G)\times\{1,2\}}$ be a vector,
and for every $v\in V(G)$ write
\[
 s_v=y_{v,1}+y_{v,2}.
\]
Define the polytope $Q(G)\subseteq\Rset^{2\abs{V(G)}}$ as the set of vectors $y$ satisfying
\begin{align}
 0\leq y_{v,i}\leq1, &\quad\text{for all $v\in V(G)$ and $i\in\{1,2\}$}; \label{eq:box}\\
 s_u+s_v\geq2,&\quad\text{for all $\{u,v\}\in E(G)$}. \label{eq:edge}
\end{align}
Note that $(s_v)$ are auxiliary variables introduced for convenience in the proof, so \eqref{eq:edge} should be interpreted as a shorthand for 
$$
y_{u,1}+y_{u,2}+y_{v,1}+y_{v,2}\geq2.
$$
We aim to show that the vertices of $Q(G)$ are in bijection with homomorphisms from $G$ to~$\Hfour$.  This gives the required AP-reduction from $\nHom(\Hfour)$ to $\nPolytopeVertices$.
The key observation is that every vertex of $Q(G)$ belongs to $\{0,1\}^{2\abs{V(G)}}$.

Suppose, for a contradiction, that $y$ is a vertex of $Q(G)$ with a fractional coordinate.  We call $(y_{v,1},y_{v,2})$ the \textit{block} corresponding to~$v$.
We first claim that every fractional block has exactly one fractional coordinate.  Indeed, suppose both coordinates in the block corresponding to~$v$ are fractional.  For all sufficiently small $\delta>0$, we can add $\delta$ to one coordinate and subtract $\delta$ from the other.  This preserves $s_v$, and hence preserves all the inequalities in \eqref{eq:edge}.  Both directions of the perturbation satisfy the box constraints in \eqref{eq:box}.  This contradicts the assumption that $y$ is a vertex.  The claim follows.  In particular, $s_v$ is non-integral whenever the block corresponding to~$v$ is fractional.

Let $F\subseteq V(G)$ be the set of vertices whose blocks are fractional.  Define a graph~$T$ on vertex set~$F$ by putting $\{u,v\}\in E(T)$ if $\{u,v\}\in E(G)$ and the corresponding inequality is tight, namely
\begin{equation}\label{eq:tight}
 s_u+s_v=2.
\end{equation}
Note that a tight inequality cannot have exactly one endpoint in~$F$, since the sum of an integer and a non-integer cannot be~2.

We claim that $T$ is bipartite.  Otherwise, let $v_0v_1\cdots v_{2k}v_0$ be an odd cycle in~$T$.  The equations in \eqref{eq:tight} imply
\[
 s_{v_0}=s_{v_2}=\cdots=s_{v_{2k}}
 \qquad\text{and}\qquad
 s_{v_{2k}}=2-s_{v_0}.
\]
The final edge $v_{2k}v_0$ gives $2s_{v_0}=2$.  Hence $s_{v_0}=1$, contradicting the fact that $s_{v_0}$ is non-integral.

Now let $C$ be a connected component of~$T$, with bipartition $C=C^+\cupdot C^-$.  For every $v\in C$, perturb the unique fractional coordinate in its block by $+\delta$ if $v\in C^+$, and by $-\delta$ if $v\in C^-$.  The opposite perturbation is obtained by reversing the signs.  Every tight edge inequality incident with~$C$ has both endpoints in~$C$ and is preserved.  Every other affected edge inequality has positive slack.  Also, all the coordinates being changed lie strictly between 0 and~1.  Thus both perturbations are feasible for all sufficiently small $\delta>0$, again contradicting the assumption that $y$ is a vertex.  It follows that $F$ is empty, and hence every vertex of $Q(G)$ is binary.

It remains to show that the reduction mapping $G$ to $Q(G)$ is parsimonious.  Observe that there is a natural bijection $\phi:\{0,1\}^{V(G)\times\{1,2\}}\to V(\Hfour)^{V(G)}$ that interprets each binary vector $y\in\{0,1\}^{V(G)\times\{1,2\}}$ as a function $\sigma_y:V(G)\to V(\Hfour)$.  Explicitly, for each vertex $v\in V(G)$ we define $\sigma_y(v)=w$, where $w$ is the vertex of~$\Hfour$ encoded by the pair $(y_{v,1},y_{v,2})$.

In one direction, suppose that $y$ is a vertex of $Q(G)$.  We saw that $y$ is binary vector, so $\phi(y)$ is a function from $V(G)$ to $V(\Hfour)$.  It is easy to see that the construction of $Q(G)$ --- specifically the set of inequalities~\eqref{eq:edge} --- forces $\phi(y)$ to be a homomorphism.  This gives an injective map from vertices of $Q(G)$ to $\Hom(G,\Hfour)$.  In the other direction, any homomorphism  $\sigma\in\Hom(G,\Hfour)$ corresponds to a binary vector $y=\phi^{-1}(\sigma)$ and the construction of $Q(G)$ ensures that $y\in Q(G)$.  Any binary vector lying in a 0/1-polytope is a vertex of that polytope.  Thus the function $\phi^{-1}$ is an injective map from $\Hom(G,\Hfour)$ to vertices of $Q(G)$.
\end{proof}  

\begin{proof}[Proof of Theorem~\ref{thm:main}]
$H_4$ is \textit{Graph 39} in Kelk's PhD thesis~\cite{KelkThesis}. On p.71 he shows that $\nSAT$ is AP-reducible to $\nHom(\Hfour)$ and hence $\nHom(\Hfour)$ is hard for $\numP$ under AP-reductions. This fact, combined with Lemma~\ref{lem:reduction} and transitivity of AP-reductions, proves the main claim in Theorem~\ref{thm:main}.

For the subsidiary claim, take any NP-complete problem, say $\SAT$, and consider its counting analogue $\nSAT$.  Suppose, for a contradiction, that $\nPolytopeVertices$ has an FPRAS.  Let $\Psi$ (a CNF formula) be an instance of $\SAT$ and hence also of $\nSAT$.  By the first claim, there is an AP reduction from $\nSAT$ to $\nPolytopeVertices$.  This, combined with the FPRAS for  $\nPolytopeVertices$, yields an FPRAS for $\nSAT$.  Setting $\epsilon=\frac12$ in the definition of FPRAS, we have a randomised algorithm that, with error probability $\frac14$, can decide whether $\Psi$ has no satisfying assignments or a positive number.  Thus we have a polynomial-time randomised algorithm for $\SAT$ with two-sided error probability.  A standard technique~(see e.g., \cite[Ex 1.15]{MR95}) strengthens this to a randomised algorithm with one-sided errors.  So $\SAT\in\RP$ and $\RP=\NP$.
\end{proof}

A possible objection to the above line of argument is that it is based on a result from a non-peer-reviewed source.  We can avoid this objection at the expense of a little extra work.

The retraction counting problem is the following. 
\begin{description}[itemsep=0pt]
\item [Problem:] $\nRet(H)$.
\item [Instance:] An undirected graph $G$ without loops, and a collection of sets $\bfS=\{S_v\subseteq V(H):v\in V(G)\}$ such that, for all $v\in V(G)$, we have $|S_v|\in\{1,|V(H)|\}$.  
\item [Output:] The number of homomorphisms $\sigma$ from $G$ to $H$ such that, for all $v\in V(G)$, we have $\sigma(v)\in S_v$.
\end{description}

If we think of the vertices of $H$ as being colours, and a function $\sigma:V(G)\to V(H)$ as being a colouring of the vertices of $G$, then $\nRet(H)$ counts ``$H$-colourings'' of $G$ in which certain vertices of $G$ are forced to take on a specified colour.\footnote{This not a ``retraction'' as usually defined, but equivalent in this context, as explained by Focke et al.~\cite{FGZ21}.} For the particular $\Hfour$ in Figure~\ref{fig:H}, we have the following analogue of Lemma~\ref{lem:reduction}

\begin{lemma}\label{lem:variantreduction}
$\nRet(\Hfour)$ is AP-reducible to $\nPolytopeVertices$.
\end{lemma}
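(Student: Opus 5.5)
We reuse the construction of $Q(G)$ from the proof of Lemma~\ref{lem:reduction} and add constraints for the pinned vertices. Given an instance $(G,\bfS)$ of $\nRet(\Hfour)$, let $Q(G,\bfS)$ be the set of $y\in\Rset^{V(G)\times\{1,2\}}$ satisfying \eqref{eq:box} and \eqref{eq:edge}. For every $v$ with $S_v=\{c\}$, where $c=(c_1,c_2)\in\{0,1\}^2$, we also impose the equalities $y_{v,1}=c_1$ and $y_{v,2}=c_2$, each written as a pair of inequalities. Unconstrained vertices, with $|S_v|=4$, receive no extra constraints. The reduction is clearly polynomial-time. The goal is to show that the vertices of $Q(G,\bfS)$ are in bijection with the homomorphisms $\sigma\in\Hom(G,\Hfour)$ satisfying $\sigma(v)\in S_v$ for all $v$.

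The key step is again that every vertex of $Q(G,\bfS)$ is binary, and the perturbation argument from Lemma~\ref{lem:reduction} carries over almost verbatim. Pinned blocks are integral, so any fractional block belongs to an unpinned vertex, and the perturbations only ever touch fractional coordinates. In detail:
\begin{itemize}
\item A block with two fractional coordinates admits the $(+\delta,-\delta)$ perturbation. This preserves $s_v$ and does not touch any pinning constraint.
\item Hence each fractional block has exactly one fractional coordinate, and the corresponding $s_v$ is non-integral.
\item The tight-edge graph $T$ on the set $F$ of fractional vertices is bipartite, by the same odd-cycle argument.
\item On any component $C$ of $T$, the signed $\pm\delta$ perturbation of the fractional coordinates is feasible in both directions. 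The pinning equalities involve only coordinates of pinned vertices, which lie outside $F$, so they are unaffected.
\end{itemize}
Therefore $F=\emptyset$, and $Q(G,\bfS)$ is a bounded 0/1 polytope, possibly empty.

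For the bijection, take a vertex $y$. It is binary, so $\phi(y)$ is a homomorphism by \eqref{eq:edge}, exactly as before, and it respects the lists because of the pinning equalities. Conversely, a homomorphism respecting the lists gives a binary point $\phi^{-1}(\sigma)$ that lies in $Q(G,\bfS)$. Any binary point in a 0/1 polytope is a vertex, since it is an extreme point of the unit cube containing the polytope. The two maps are mutually inverse restrictions of $\phi$, so the reduction is parsimonious.

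I expect no real obstacle. The only point needing care is checking that the pinning equalities never interfere with the perturbations. This holds because every perturbed coordinate is fractional, while every pinned coordinate is integral. Equivalently, one could substitute the pinned values and view $Q(G,\bfS)$ as a face of $Q(G)$, since the pinning equalities are tight box constraints. A face of a 0/1 polytope is itself a 0/1 polytope, which gives integrality immediately from Lemma~\ref{lem:reduction}.
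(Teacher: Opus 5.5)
Your proof is correct and follows the paper's argument. You add the pinning equalities and note that the perturbation argument only ever moves fractional coordinates of unpinned vertices, so it goes through unchanged and yields the same bijection. Your closing observation is also valid: $Q(G,\bfS)$ is a face of $Q(G)$, since the pinning equalities are tight box constraints, and so integrality follows directly from Lemma~\ref{lem:reduction}.
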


This is a slightly stronger version of the previous lemma.
The advantage of this version is that the hardness proof for $\nRet(\Hfour)$ appears in archival form~\cite{FGZ21}.  The proof of Lemma~\ref{lem:variantreduction} is a mild complication of the proof of Lemma~\ref{lem:reduction}.

\begin{proof}[Proof of Lemma~\ref{lem:variantreduction}]\label{lem:variantreduction'}
Given an instance $(G,\bfS)$ of $\nRet(\Hfour)$ we code retractions as vectors $y\in\mathbb{Q}^{V(G)\times\{1,2\}}$ as before.  The polytope $Q(G,\bfS)$ is defined by the same linear inequalities as in the earlier proof, together with some additional equalities coding the sets $S_v\in\bfS$. For example, the set $S_v=\{01\}$ is coded by the equalities $y_{v,1}=0$ and $y_{v,2}=1$.  No additional equalities are added when $S_v=V(\Hfour)$.

Consider a vertex $y$ of $Q(G,\bfS)$. For the same reason as in the earlier proof, it cannot be the case that $y_{v,1}$ and $y_{v,2}$ are both fractional.  The blocks and tight edge constraints induce a bipartite subgraph as before.  Noting that none of these blocks correspond to vertices~$v$ with $|S_v|=1$, the rest of the argument goes through as before.  We deduce that the vertices of $Q(G,\bfS)$ are in bijection with retractions to~$\Hfour$.
\end{proof}

To complete the alternative proof of Theorem~\ref{thm:main} we just need to compare $\Hfour$ against the complexity classification of Focke, Goldberg and \v{Z}ivn\'y~\cite[Theorem~2]{FGZ21}.  
The underlying simple graph of~$\Hfour$ is a triangle with a pendant vertex.  In particular, $\Hfour$ is connected and square-free.  It is enough to check that $\Hfour$ is not in any of the easy or $\nBIS$-equivalent cases in that theorem. 

The graph $\Hfour$ contains both looped and unlooped vertices.  Thus it is neither a reflexive clique nor an irreflexive complete bipartite graph.  Also, since $\Hfour$ has loops, it is not an irreflexive caterpillar.

It remains to rule out the class $\HBIS$.  Suppose, for a contradiction, that $\Hfour\in\HBIS$.  In the notation of~\cite[Definition~10]{FGZ21}, there is a positive integer~$m$, reflexive cliques $K_0,\ldots,K_m$, and distinct looped vertices $p_0,\ldots,p_{m+1}$ such that $p_i,p_{i+1}\in K_i$ and
\[
 K_{i-1}\cap K_i=\{p_i\}.
\]
Since $\Hfour$ has exactly three looped vertices, we must have $m=1$.  The two cliques $K_0$ and $K_1$ intersect only at~$p_1$, so $p_0p_2$ is not an edge of~$\Hfour$.  This is a contradiction, since the three looped vertices of~$H_4$ form a triangle.

The third case of the trichotomy applies, and $\nRet(\Hfour)$ is $\nSAT$-hard (and hence $\numP$-hard) under approximation-preserving reductions.

\section{Total unimodularity}
Recall that a matrix is totally unimodular (TU) if every square submatrix has determinant $-1$, 0 or 1.   It is not to difficult to see that the polytopes $Q(G)$ constructed in the main reduction do not necessarily correspond to TU constraint matrices.  In particular, consider the linear inequalities defining the polytope $Q(C_3)$, where $C_3$ is the cycle on vertex set $\{u,v,w\}$.  They include the inequalities
$$\begin{matrix}
y_{u,1}+y_{u,2}&{}+y_{v,1}+y_{v,2}&&\geq2\\
y_{u,1}+y_{u,2}&&{}+y_{w,1}+y_{w,2}&\geq2\\
&\hphantom{{}+} y_{v,1}+y_{v,2}&{}+y_{w,1}+y_{w,2}&\geq2
\end{matrix}$$
together with the box constraints.  So, written in the form $Ay\leq b$, the matrix $A$ contains the $3\times6$ submatrix
$$
-\begin{pmatrix}
1&1&1&1&0&0\\
1&1&0&0&1&1\\
0&0&1&1&1&1
\end{pmatrix}
$$
which in turn contains the $3\times3$ submatrix 
$$-\begin{pmatrix}
1&1&0\\
1&0&1\\
0&1&1
\end{pmatrix}
$$
As this matrix has determinant~2, the matrix $A$ is not totally unimodular.

It is possible to speculate that escaping from total unimodularity is critical to the proof of Theorem~\ref{thm:main}.

\bibliographystyle{plain}
\bibliography{polytope-vertex-count}

\end{document}